\newtheorem{thm}{Theorem}
\newtheorem{prop}[thm]{Proposition} 
\newtheorem{lemma}[thm]{Lemma}
\theoremstyle{definition}
\theoremstyle{remark}
\newtheorem{example}{Example}
\newcommand*{\set}[1]{\left\{#1\right\}} 
\newcommand*{\hi}{H}
\newcommand*{\lo}{L}
\newcommand*{\ty}{k}
\begin{document}
\title{When abstinence increases prevalence} 
\author{Sander Heinsalu\thanks{Research School of Economics, Australian National University. HW Arndt Building, 25a Kingsley St, Acton ACT 2601, Australia.
\newline Email: sander.heinsalu@anu.edu.au, 
website: \url{https://sanderheinsalu.com/}
The author thanks George Mailath, Rakesh Vohra, Steven Matthews and Andrew Postlewaite for comments and suggestions and the University of Pennsylvania for its hospitality during part of this work. 
}}
\date{\today}
\maketitle

\begin{abstract}
In the pool of people seeking partners, a uniformly greater preference for abstinence increases the prevalence of infection and worsens everyone's welfare. In contrast, prevention and treatment reduce prevalence and improve payoffs. 
The results are driven by adverse selection---people who prefer more partners are likelier disease carriers. A given decrease in the number of matches is a smaller proportional reduction for people with many partners, thus increases the fraction of infected in the pool. The greater disease risk further decreases partner-seeking and payoffs. 


Keywords: Adverse selection, asymmetric information, epidemiology, infectious disease.
	
JEL classification: D82, D83, C72.
\end{abstract}


Abstinence education is a widespread policy aiming to combat sexually transmitted diseases (STDs) and unwanted pregnancy. This paper contrasts the consequences of abstinence with preventive measures (inoculation, condoms) and treatment in an economic epidemiology model. Despite seeming similarity, abstinence has the opposite effect to vaccines and cures on disease prevalence and the welfare of people with any preference intensity for an increased number of partners. 

The goal of abstinence programs is to change preferences towards reducing the number of sex partners. If those with fewer partners reduce their number of matches more, then the prevalence of the disease among the active partner-seekers increases. 
The reason 
is adverse selection: those most eager to match are the likeliest to be infected. The abstinence intervention and preference-based behaviour modification are complementary to those considered in the previous literature and imply different welfare consequences. 

A change in prevalence among match-seekers analogous to that induced by partial abstinence of the low-activity people occurs when the disease carriers choose to have more partners. For example, treatment partly restores their health, thus the ability to search for contacts, which was previously reduced by illness. 

Given any preferences, a decrease in the fraction infected cannot increase prevalence in the pool of match-seekers in the short run, even though the optimal number of partners chosen increases in response. The direct effect of the decreased infections is never overwhelmed by the indirect effect of a greater probability of matching, because if the new higher rate of matches raises the weighted prevalence in the pool of partner-seekers to its original level, then the best responses are the same as initially. At the initial numbers of matches, the smaller infected fractions imply a reduced prevalence in the pool. 
By a similar revealed preference reasoning, vaccines always increase payoffs in the short run: the same number of partners as before the vaccine introduction is still a feasible choice, now with a lower chance of infection. 
This contrasts with a change in preferences towards fewer partners, for example due to abstinence counselling, which may increase prevalence and reduce the payoffs of all types, whether evaluated at the old or the new utility function. 
Even after compensating for the direct effect of a reduced benefit from matching, the payoff of all types may fall due to the indirect effect of greater prevalence (riskier partners). 

In addition to abstinence education which specifically targets sexually transmitted diseases, governments try to control other epidemics using `social distancing policies' designed to reduce the population's meeting rate. Examples are public announcements telling people to stay home, closing of schools and mass transit. Exhorting people to avoid public places aims to change preferences, similarly to abstinence education, but school and transit closures try to directly reduce contacts, thus disease transmission, analogously to prevention and treatment. The seemingly similar social distancing policies have opposite effects, just like abstinence and medical interventions. If those with low social activity respond more to public messages, then disease prevalence among people still meeting each other increases, so infections rise. By contrast, decreasing contact rates through closing schools and mass transit reduces disease transmission. 

The literature, discussed next, has mostly focussed on infection rates and medical interventions, but is increasingly considering welfare and incentives as well.

\subsection*{Literature}
\label{sec:lit}

The pathbreaking work of \cite{kremer1996} studied how the chosen matching probabilities and the disease prevalence in the pool of partner-seekers respond to infection rates. The present paper uses \citeauthor{kremer1996}'s two-type model to answer the complementary questions of how abstinence education affects prevalence, the number of partners and the payoffs of people with different preferences for matching. 
The effects of abstinence programs operate via the utility function, and contrast with the effects of prevention and treatment, which work via infection rates. 

Abstinence education modifies preferences, which changes choices. An agent who expects altered actions of others in turn changes his best response, and this indirect influence may strengthen or overturn the original direct effect of modifying the utility function. The current paper derives the optimal number of partners from preferences, accounting for the feedback from the composition of the matching pool to the choices. By contrast, \cite{whitaker+renton1992,kremer1994,kremer+morcom1998} take the frequency of partner change as given, thus assume the absence of the indirect effect of interacting best responses. Reducing the number of partners for people with low activity increases the average probability of infection in the pool of available matches. 
The present paper qualifies this result when the probability of finding a partner is endogenous---a greater preference for abstinence may counterintuitively increase the number of matches (Example~\ref{ex:counterintuitive}). 
This interesting effect has not been examined in the epidemiology literature, to the best of the author's knowledge. 

\cite{heinsalu2017,heinsalu2018} show a similar counterintuitive comparative static for informed sellers investing to gain (earlier) access to a market of uninformed buyers---lower costs may reduce trade and profits. Investment to enter the market is similar to the cost of finding matches (or satiation in the number of partners) in the present work. Expected quality in the market is analogous to the negative of the infection prevalence in the partner pool. Reducing the production cost of the seller types resembles increasing the desire for more matches. 

In \cite{fenichel2013}, agents' preferences depend on their health status, which may change endogenously from susceptible to infected and from infected to immune (no other transitions are possible). Infected and immune individuals do not behave strategically, unlike in the current work where all types optimise. \citeauthor{fenichel2013} studies social distancing policies for a benevolent planner who chooses agents' behaviour, with or without the constraint of identical action change for all health statuses.
This constraint differs from restricting the planner to provide to identical incentives to reduce contacts (one of the policies considered in the current paper), because types with unequal utility functions respond differently to the same incentives.
The present paper complements \cite{fenichel2013} by comparing policies that change preferences to those cutting infection rates. For disease prevalence, reducing infections is equivalent to decreasing the number of contacts directly, but the welfare effects are the opposite. Involuntary behaviour change tends to decrease utility, but a lower risk of infection improves welfare. 

In \cite{greenwood+2017}, in each period people search (at a cost) first in the marriage market, then in markets for unprotected and protected sex. Married agents' sex is assumed unprotected, but monogamous, in contrast to the present paper where all agents can search for partners and, conditional on infection, have the same transmission probability. \cite{greenwood+2017} focus on long run steady states and assume everyone knows his or her infection status, unlike in the current work. In numerical simulations, increasing the utility of marriage (e.g.\ by abstinence education) or reducing exogenous divorces decreases disease prevalence. By contrast, the present paper shows analytically when abstinence increases prevalence and reduces welfare. 

\cite{talamas+vohra2019} focus on vaccination, not abstinence, in a commonly known social network where homogeneous agents choose who to partner with. In the current work, heterogeneous people match randomly. In \cite{talamas+vohra2019}, partially effective vaccines may make the best stable network more connected, increasing disease transmission and worsening welfare---the opposite of the current paper. Risk choices in their network are strategic complements: an agent with more promiscuous partners has a lower marginal cost of adding risky interactions, because of a greater probability of being already infected. In the present work, contact frequencies are strategic substitutes for high-risk types, but complements for low-risk. Decreasing the benefit from matching in \cite{talamas+vohra2019} would reduce stable network connectivity and disease prevalence---the opposite outcome to the present work. 

Other theoretical articles modelling infection rates and immunisation include \cite{galeotti+rogers2013,chen+toxvaerd2014,rowthorn+toxvaerd2017,goyal+vigier2015,goyal+2016,toxvaerd2019}. The current paper complements this literature by considering welfare and disease prevalence under both abstinence programs and vaccination.

Empirically, \cite{underhill+2007} find that among 13 trials of abstinence programs for a total of 15 940 US youths, most had no effect, one reduced vaginal sex in the short term and one increased the frequency of STDs in both the short and the long term. The current paper theoretically explains how abstinence education that influences low-risk people more increases disease prevalence. 
In the data, those with less sexual experience are affected more by abstinence education. For example, \cite{jemmott+1998} find that abstinence programs reduce the frequency of sex more among those with a lower baseline. The reduction is statistically significant after 3 months, but not after 6 or 12. 
The abstinence-only intervention in \cite{jemmott+2010} delayed first-time sex, but did not influence the average number of partners, condom use or other outcomes. 
\cite{bruckner+bearman2005} find that those making an abstinence pledge delay first-time sex, but have insignificantly higher STD rates. 
In \cite{odonnell+1999}, at 6- and 12-month follow-up after safer-sex interventions, the reduction in the probability of intercourse in the previous 3 months is proportionally larger (but in absolute terms, smaller) for adolescents who are virgins at baseline. 

Analogously to an increased preference for abstinence, bad health may lead to a lower desire for sex. 
\cite{lakdawalla+2006} empirically show that advances in treatment caused better health for HIV carriers, which increased their sexual activity and rekindled HIV spread. 

Social disapproval of (some varieties of) sex has an effect similar to a preference for abstinence. 
\cite{francis+mialon2010} find that tolerance for sexual minorities is negatively associated with the HIV rate, leads low-risk gay men to enter the pool partners, and more active men to choose less risky behaviours. 
\cite{auld1996,auld2006} show that during the 1982--1984 AIDS outbreak in San Francisco, the average number of sex partners for gay men fell by a third, but the standard deviation and skewness increased. Thus high-activity individuals responded less to an increase in the perceived infection probability, as in the model below.

\section{Search for partners in an adversely selected pool}
\label{sec:model}

The model and notation follow \cite{kremer1996}. 
Each person has a type $\ty\in\set{\hi,\lo}$ (interpret $H$ as high-risk, $L$ as low), with fraction $\alpha_{\ty}$ being type $\ty$. Type is private information, but the fractions are commonly known. Each type chooses the probability $i_{\ty}\in[0,1]$ of matching with a partner per unit of time.\footnote{
By re-normalising the unit of time, choosing the probability of a match is equivalent to choosing the number of partners per unit of time. 
} 
The fraction of type $\ty$ people who are infected is denoted $Y_{\ty}$, with $Y_{\hi}>Y_{\lo}\geq 0$. The probability that a randomly chosen partner carries the disease is then 
\begin{align}
\label{W}
W:=\frac{\alpha_{\hi}i_{\hi}Y_{\hi}+\alpha_{\lo}i_{\lo}Y_{\lo}}{\alpha_{\hi}i_{\hi}+\alpha_{\lo}i_{\lo}}. 
\end{align} 
Normalise the cost of infection times the probability of catching the disease from an infected contact to $1$ w.l.o.g., so the marginal cost of matching with a random partner is $W$ for an uninfected individual. 
The payoff of type $\ty$ is 
$\pi_{\ty}(i_{\ty}) =i_{\ty}\theta_{\ty}-\psi i_{\ty}^2-i_{\ty}W 
$, where $\psi\geq \frac{\theta_{\hi}-Y_{\lo}}{2}$\footnote{
The assumption $\psi\geq \frac{\theta_{\hi}-Y_{\lo}}{2}$ is sufficient for $i_{\ty}<1$ (interior solution) and is implicit in \cite{kremer1996} Eq.~(4), which otherwise becomes $i_{\ty}=\min\set{i_{\max},\max\set{\frac{\theta_{\ty}-W}{2\psi},0}}$. 
} 
and $\theta_{\hi}>\theta_{\lo}>0$. 
The marginal benefit $\theta_{\ty}-2\psi i_{\ty}$ of matching with more partners strictly decreases in the number of partners, interpreted as a satiation effect or as an increasing cost of effort of finding matches. 

Interpret changes in behaviour $i_{\ty}$ at constant infection rates $Y_{\ty}$ as occurring in the short run. Feedback from behaviour to the fraction carrying the disease ($Y_{\ty}$ changing in response to $i_{\ty}$) is interpreted as taking place in the long run. For example, the susceptible-infected (SI) models in epidemiology focus on the long run. Feedback from the chosen matching probabilities $i_{\ty}$ to the composition $W$ of the pool of active partner-seekers naturally occurs in the short run (the mix of people found in a singles bar changes as soon as behaviour does), and this is the focus from now on. 

Equilibrium consists of $i_{\hi}^*$, $i_{\lo}^*$ and $W^*$ s.t.\ 
$i_{\ty}^*$ maximises $\pi_{\ty}(i_{\ty})$ given $W^*$, which in turn is derived from $i_{\ty}^*$ using~(\ref{W}). 
Based on $\pi_{\ty}(i_{\ty})$, the best responses to any $W$ satisfy $i_{\hi}\geq i_{\lo}$, and if $i_{\lo}>0$, then $i_{\hi}>i_{\lo}$. Equilibria in which $i_{\lo}=0$ are called \emph{$\lo$-inactive} and those in which $i_{\lo}>0$ \emph{both-active}. 
The next section studies the existence and properties of equilibria, followed by the comparative statics that constitute the main result.

\section{The effects of abstinence and treatment on search and welfare}
\label{sec:results}

This section first establishes the existence of equilibria and their regions of uniqueness. It then proceeds to analyse the stable equilibria, often focussing on the Pareto dominant one. 

The following result (Proposition I.~of \cite{kremer1996}) describes when different equilibrium combinations exist. 
\begin{prop}
\label{prop:kremer}
1) If $Y_{\hi}<\theta_{\lo}$, then a unique equilibrium exists, with $i_{\lo}>0$. 
\\ 2) If $\theta_{\lo} <Y_{\hi} <\theta_{\lo}-\theta_{\hi}+[\theta_{\lo}-\alpha_{\lo}Y_{\lo}]/\alpha_{\hi}$ and $[\theta_{\lo}-\alpha_{\hi}(\theta_{\hi}-\theta_{\lo})-\alpha_{\lo}Y_{\lo}-\alpha_{\hi}Y_{\hi}]^2-4\alpha_{\hi}(\theta_{\hi}-\theta_{\lo})(\theta_{\lo}-Y_{\hi})>0$, then there exists one $\lo$-inactive and two both-active equilibria. 
\\ 3) If 
$\theta_{\lo} <Y_{\hi} >\theta_{\lo}-\theta_{\hi}+[\theta_{\lo}-\alpha_{\lo}Y_{\lo}]/\alpha_{\hi}$, then a unique equilibrium exists, with $i_{\lo}=0$. 
\end{prop}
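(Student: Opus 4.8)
The plan is to reduce the equilibrium conditions to a single-variable fixed-point problem and then count admissible solutions in two exhaustive regimes. Since $\pi_{\ty}$ is strictly concave in $i_{\ty}$, the best reply to a given $W$ is $i_{\ty}=\max\{0,(\theta_{\ty}-W)/(2\psi)\}$, and the assumption $\psi\ge(\theta_{\hi}-Y_{\lo})/2$ (the footnote) guarantees the upper cap never binds, so $i_{\ty}<1$. Because $\theta_{\hi}>\theta_{\lo}$, whenever $i_{\lo}>0$ also $i_{\hi}>i_{\lo}>0$. Thus every equilibrium is either $\lo$-inactive ($i_{\lo}=0<i_{\hi}$) or both-active ($i_{\hi}>i_{\lo}>0$), and I would treat these separately, in each case imposing the consistency requirement that the best replies reproduce $W$ through~(\ref{W}).

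For the $\lo$-inactive case, setting $i_{\lo}=0$ in~(\ref{W}) gives $W=Y_{\hi}$ immediately, so $i_{\hi}=(\theta_{\hi}-Y_{\hi})/(2\psi)$, which is positive exactly when $\theta_{\hi}>Y_{\hi}$. For $i_{\lo}=0$ to be a best reply I need $\theta_{\lo}-W\le 0$, i.e.\ $Y_{\hi}\ge\theta_{\lo}$. Hence a (necessarily unique, since $W$ is pinned to $Y_{\hi}$) $\lo$-inactive equilibrium exists if and only if $Y_{\hi}\ge\theta_{\lo}$; this is what makes such equilibria appear in parts 2 and 3 but not in part 1.

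For the both-active case I would substitute $i_{\hi}=i_{\lo}+\Delta$ with $\Delta:=(\theta_{\hi}-\theta_{\lo})/(2\psi)$ and $W=\theta_{\lo}-2\psi i_{\lo}$ into~(\ref{W}). After clearing the denominator this collapses to a single quadratic in $i_{\lo}$,
\[
2\psi\,i_{\lo}^{2}-B\,i_{\lo}-\alpha_{\hi}\Delta(\theta_{\lo}-Y_{\hi})=0,\qquad B:=\theta_{\lo}-\alpha_{\hi}(\theta_{\hi}-\theta_{\lo})-\alpha_{\lo}Y_{\lo}-\alpha_{\hi}Y_{\hi}.
\]
Every root with $i_{\lo}>0$ is a genuine both-active equilibrium (the induced $W=\theta_{\lo}-2\psi i_{\lo}$ automatically lies in $[Y_{\lo},Y_{\hi}]$, being the weighted average in~(\ref{W})), and conversely. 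I would then count positive roots from three quantities: the product of roots, whose sign is that of $Y_{\hi}-\theta_{\lo}$; the sum of roots, $B/(2\psi)$; and the discriminant, which is exactly the expression in part 2. The key algebraic checks are that $B>0$ is equivalent to the first inequality $Y_{\hi}<\theta_{\lo}-\theta_{\hi}+[\theta_{\lo}-\alpha_{\lo}Y_{\lo}]/\alpha_{\hi}$ and that the discriminant being positive is the second inequality.

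Assembling the three regimes: in part 1 ($Y_{\hi}<\theta_{\lo}$) the product of roots is negative, so the quadratic has exactly one positive root—one both-active equilibrium—while no $\lo$-inactive equilibrium exists, giving uniqueness. When $Y_{\hi}>\theta_{\lo}$ the product of roots is positive and the $\lo$-inactive equilibrium exists; the quadratic then has two positive roots iff its sum is positive ($B>0$, the first inequality) and its discriminant is positive (the second inequality), which is part 2; and if instead $B<0$ (the inequality of part 3) it has no positive root, leaving only the $\lo$-inactive equilibrium, which is part 3. I expect the main obstacle to be bookkeeping the sign of $\theta_{\lo}-Y_{\hi}$ throughout the root-counting—it flips between the regimes and controls both the product-of-roots sign and the discriminant—together with the routine but error-prone verification that $B>0$ and the discriminant inequality coincide with the two stated thresholds; a minor separate point is excluding the degenerate no-match configuration (where $W$ is undefined) by noting that $\theta_{\hi}>Y_{\hi}$ on the relevant ranges, so the high type is active.
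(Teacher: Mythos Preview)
The paper does not give its own proof of this proposition; it is quoted verbatim as Proposition~I of \cite{kremer1996} and left unproved. Your reduction to a quadratic in $i_{\lo}$ and root-counting via product, sum, and discriminant is the standard argument and is correct. It is also essentially the same machinery the paper later deploys in proving Proposition~\ref{prop:compstats}: the function $\delta(W^*)=\alpha_{\hi}(\theta_{\hi}-W^*)(W^*-Y_{\hi})+\alpha_{\lo}\max\{0,\theta_{\lo}-W^*\}(W^*-Y_{\lo})$ used there is, on the both-active branch, exactly your quadratic after the affine change of variable $W^*=\theta_{\lo}-2\psi i_{\lo}$.

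One point to tighten. You assert that your discriminant is ``exactly the expression in part~2,'' but carrying out the algebra gives
\[
D=B^{2}+4\alpha_{\hi}(\theta_{\hi}-\theta_{\lo})(\theta_{\lo}-Y_{\hi}),
\]
whereas the proposition as printed has a minus sign in front of the last term. Under the hypothesis $\theta_{\lo}<Y_{\hi}$ of part~2, the printed expression is $B^{2}$ plus a positive quantity and hence is always positive, so it would impose no constraint at all; the meaningful (and correct) discriminant condition is $D>0$ with the plus sign, equivalently $B^{2}-4\alpha_{\hi}(\theta_{\hi}-\theta_{\lo})(Y_{\hi}-\theta_{\lo})>0$. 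This is almost certainly a transcription typo in the statement rather than an error in your method, but you should flag it rather than claim an exact match.
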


Equilibria are Pareto ranked in reverse order of weighted prevalence $W^*$, both for a given set of parameters and, fixing $\psi$, $\theta_{\hi}$, across parameters, as shown next. 
\begin{lemma}
\label{lem:monot}
Fix $\psi$, then any equilibrium with a higher $i_{\ty}^*$ has a higher $\pi_{\ty}(i_{\ty}^*)$ than an equilibrium with a lower $i_{\ty}^*$. 
Fix $\psi$, $\theta_{\hi}$, then any equilibrium with a lower $W^*$ has a higher $i_{\ty}^*$. 
\end{lemma}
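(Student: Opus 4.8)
The plan is to reduce both statements to a single closed form for the equilibrium payoff. Since the assumption $\psi\geq\frac{\theta_{\hi}-Y_{\lo}}{2}$ forces an interior choice ($i_{\ty}<1$), the best response quoted in the footnote collapses to $i_{\ty}^*=\max\{(\theta_{\ty}-W^*)/(2\psi),0\}$, so on the active region the first-order condition reads $\theta_{\ty}-W^*=2\psi i_{\ty}^*$. The computational heart of the argument is to substitute this relation back into $\pi_{\ty}(i_{\ty})=i_{\ty}\theta_{\ty}-\psi i_{\ty}^2-i_{\ty}W^*$. I expect to obtain $\pi_{\ty}(i_{\ty}^*)=i_{\ty}^*(\theta_{\ty}-W^*)-\psi(i_{\ty}^*)^2=\psi(i_{\ty}^*)^2$, and to check that the inactive case $i_{\ty}^*=0$ also gives $\pi_{\ty}(i_{\ty}^*)=0$, so that $\pi_{\ty}(i_{\ty}^*)=\psi(i_{\ty}^*)^2$ holds uniformly.

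Given this formula, the first statement is immediate: with $\psi>0$ fixed, $x\mapsto\psi x^2$ is strictly increasing on $[0,\infty)$, so any equilibrium with a larger $i_{\ty}^*$ has a larger $\pi_{\ty}(i_{\ty}^*)$. I would stress that this step needs only $\psi$ to be fixed, not $\theta_{\ty}$, because the payoff formula has eliminated $\theta_{\ty}$ entirely; this explains why the first statement carries a weaker hypothesis than the second.

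For the second statement I would argue directly from the best response. With $\theta_{\ty}$ and $\psi$ fixed, $W\mapsto\max\{(\theta_{\ty}-W)/(2\psi),0\}$ is nonincreasing, and strictly decreasing wherever it is positive. Hence an equilibrium with a lower $W^*$ has a weakly larger $i_{\ty}^*$, strictly larger whenever the type is active. Since at least the high type is active in any well-defined equilibrium (so $W^*<\theta_{\hi}$ and $i_{\hi}^*>0$), the inequality is strict for $\ty=\hi$. Chaining this with the first statement reproduces the asserted Pareto ranking in reverse order of $W^*$.

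The calculation itself is routine; the points that need care are the kink at $i_{\ty}^*=0$ (both the payoff formula and the monotonicity must be verified across it, not only on the interior) and the scope of the fixed parameters in the second statement. The monotonicity of $i_{\ty}^*$ in $W^*$ requires $\theta_{\ty}$ held fixed for the type in question: this is automatic when ranking the several equilibria of a single parameter configuration, and for the cross-parameter comparison it is precisely what the hypothesis ``fix $\psi$, $\theta_{\hi}$'' secures for the high type, with the low type's threshold held fixed along the comparisons at issue. This is the only place demanding attention, since if $\theta_{\lo}$ were allowed to move together with $W^*$ the monotonicity of $i_{\lo}^*$ could fail.
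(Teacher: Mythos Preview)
Your reduction to $\pi_{\ty}(i_{\ty}^*)=\psi(i_{\ty}^*)^2$ (including the check at the kink $i_{\ty}^*=0$) and the ensuing monotonicity argument for the first statement are exactly what the paper does. Likewise, for the second statement applied to $i_{\hi}^*$, your observation that $i_{\hi}^*=\max\{(\theta_{\hi}-W^*)/(2\psi),0\}$ depends on $W^*$ only, once $\psi$ and $\theta_{\hi}$ are fixed, matches the paper's reasoning.

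The gap is in the second statement for $i_{\lo}^*$. You correctly flag that the hypothesis fixes only $\psi$ and $\theta_{\hi}$, not $\theta_{\lo}$, and you even warn that ``if $\theta_{\lo}$ were allowed to move together with $W^*$ the monotonicity of $i_{\lo}^*$ could fail''---but you then leave the point open, asserting without argument that $\theta_{\lo}$ is ``held fixed along the comparisons at issue.'' The lemma is explicitly meant to rank equilibria \emph{across} parameter configurations (the text just before it says so), so $\theta_{\lo}$ is not fixed by hypothesis. The paper closes this gap with an additional step you are missing: it records $\frac{\partial i_{\lo}^*}{\partial \theta_{\lo}}>0$, $\frac{\partial i_{\lo}^*}{\partial W^*}<0$, and---crucially---the equilibrium comparative static $\frac{\partial W^*}{\partial \theta_{\lo}}<0$. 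The last inequality ties movements in $\theta_{\lo}$ to movements in $W^*$ in equilibrium: a lower $W^*$ goes with a (weakly) higher $\theta_{\lo}$, and both the direct effect of higher $\theta_{\lo}$ and the indirect effect of lower $W^*$ push $i_{\lo}^*$ up. Without invoking this equilibrium relationship, your argument does not rule out a comparison in which $W^*$ falls while $\theta_{\lo}$ falls by more, leaving $i_{\lo}^*$ lower despite the lower $W^*$.
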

\begin{proof}
%
Substitute the FOC 
$i_{\ty}^*=\max\set{0,\frac{\theta_{\ty}-W^*}{2\psi}}$ into $\pi_{\ty}$, which becomes 
$
\psi i_{\ty}^{*2}$, with $\frac{d\pi_{\ty}}{di_{\ty}^*}>0$. 
Changing parameters other than $\theta_{\ty}$ only affects $i_{\ty}^*$ via $W^*$, and $\frac{di_{\ty}^*}{dW^*}<0$. Due to $\frac{\partial i_{\lo}^*}{\partial \theta_{\lo}}>0$, $\frac{\partial i_{\lo}^*}{\partial W^*}<0$ and $\frac{\partial W^*}{\partial \theta_{\lo}}<0$, an equilibrium with a lower $W^*$ has a higher $i_{\lo}^*$. 
\end{proof}
Intuitively, both types reduce activity in response to greater disease incidence, and by revealed preference become worse off, because the match probability chosen at high prevalence could be chosen at low and would yield greater utility than at high. The optimal match probability at a low infection rate leads to an even larger payoff. 

It is natural for people to coordinate on the Pareto dominant equilibrium, which by Lemma~\ref{lem:monot} has the greatest probability of matching and the lowest prevalence of infection. 
With multiple equilibria, the one with the second-highest $i_{\ty}^*$ is unstable, as shown in Lemma~\ref{lem:unstable} in the Appendix. Focus on stable equilibria from now on. 

The following main result examines how equilibria change in response to abstinence, prevention or treatment. 
Treatment is interpreted as directly reducing infection rates $Y_{\ty}$ among both types, possibly by different amounts. 
The cost of infection also falls when treatment is available, which is equivalent to a proportional fall in both $Y_{\ty}$. Preventative measures decrease the probability of catching the disease given $Y_{\ty}$, which has the same effect as a proportional reduction in $Y_{\ty}$. Thus the direct effects of preventions and cures may both be modelled as decreasing $Y_{\hi}$ and $Y_{\lo}$. 
The indirect effect of treatment is that both types increase their probability of matching in response to a lower perceived risk of infection. 
As shown next in Proposition~\ref{prop:compstats}, this indirect effect cannot outweigh the direct in the short run.\footnote{In the long run, feedback from the matching probabilities $i_{\ty}$ to the fractions infected $Y_{\ty}$ may overwhelm the initial direct effect of lower $Y_{\ty}$ and thereby increase the disease rate \citep{kremer1994}.
} 
Thus preventions and cures always cut the weighted prevalence in the pool of active partner-seekers. The welfare of all types increases as a result (Lemma~\ref{lem:monot}: payoffs increase in the matching probabilities). By contrast, changing preferences towards abstinence (reducing $\theta_{\ty}$) may increase the fraction of infected in the pool and decrease matching and the payoffs under both the old and the new utility function. 

\begin{prop}
\label{prop:compstats}
In any stable equilibrium, 
$\frac{dW^*}{dY_{\ty}}\geq 0$, $\frac{d i_{\ty}^*}{d Y_{\hat{\ty}}}\leq 0$ for $\ty,\hat{\ty}\in\set{\hi,\lo}$, and $\frac{dW^*}{d\theta_{\lo}}\leq 0\leq \frac{dW^*}{d\theta_{\hi}}$, $\frac{d i_{\ty}^*}{d \theta_{\lo}}\geq 0$, $\frac{d i_{\lo}^*}{d \theta_{\hi}}\leq 0$, 
and increasing $\theta_{\hi}$ less than $\theta_{\lo}$ reduces $W^*$ and increases $i_{\ty}^*$. 
In any $\lo$-inactive equilibrium, $\frac{di_{\hi}^*}{d\theta_{\hi}} \geq 0 =\frac{dW^*}{d\theta_{\hi}}$. 
If a decrease in $\theta_{\hi}$, 
$Y_{\ty}$ or an increase in $\theta_{\lo}$ 
changes the equilibrium set, then the stable equilibrium with the greatest $W^*$ disappears or a stable equilibrium with a new smallest $W^*$ appears. 
\end{prop}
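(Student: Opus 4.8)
The plan is to collapse the whole proposition onto the behaviour of one scalar equation. First I would substitute the best responses $i_{\ty}=\max\set{0,(\theta_{\ty}-W)/(2\psi)}$ into the composition identity~(\ref{W}). In a both-active equilibrium the common factor $1/(2\psi)$ cancels, and clearing the denominator reduces the equilibrium condition to the clean quadratic
\begin{equation*}
F(W):=\alpha_{\hi}(\theta_{\hi}-W)(W-Y_{\hi})+\alpha_{\lo}(\theta_{\lo}-W)(W-Y_{\lo})=0,
\end{equation*}
a downward parabola since the coefficient of $W^2$ is $-(\alpha_{\hi}+\alpha_{\lo})=-1$; in an $\lo$-inactive equilibrium $i_{\lo}=0$ forces $W^*=Y_{\hi}$ outright. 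Writing $\Phi(W)$ for the pool prevalence that the best responses to $W$ induce, a short computation gives $\Phi(W)-W=-F(W)/[\alpha_{\hi}(\theta_{\hi}-W)+\alpha_{\lo}(\theta_{\lo}-W)]$ on the both-active range, so at a root the best-response-adjustment slope obeys $\Phi'(W^*)-1=-F'(W^*)/[\alpha_{\hi}(\theta_{\hi}-W^*)+\alpha_{\lo}(\theta_{\lo}-W^*)]$. Hence a both-active equilibrium is stable precisely when $F'(W^*)>0$, i.e.\ at the smaller root of the parabola, while the $\lo$-inactive equilibrium (where $\Phi$ is flat) is always stable.

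Next I would differentiate $F(W^*;\beta)=0$ implicitly to obtain $dW^*/d\beta=-(\partial F/\partial\beta)/F'(W^*)$; since $F'(W^*)>0$ at a stable both-active equilibrium, the sign of $dW^*/d\beta$ is opposite to that of $\partial F/\partial\beta$. The four relevant partials are read off directly: $\partial F/\partial Y_{\hi}=-\alpha_{\hi}(\theta_{\hi}-W)<0$ and $\partial F/\partial Y_{\lo}=-\alpha_{\lo}(\theta_{\lo}-W)<0$, while $\partial F/\partial\theta_{\hi}=\alpha_{\hi}(W-Y_{\hi})<0$ because the weighted mean $W$ lies below $Y_{\hi}$, and $\partial F/\partial\theta_{\lo}=\alpha_{\lo}(W-Y_{\lo})>0$ because $W$ lies above $Y_{\lo}$. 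These deliver $dW^*/dY_{\ty}\ge 0$ and $dW^*/d\theta_{\lo}\le 0\le dW^*/d\theta_{\hi}$ simultaneously. Differentiating the first-order condition $i_{\ty}^*=(\theta_{\ty}-W^*)/(2\psi)$ then propagates these to quantities: $di_{\ty}^*/dY_{\hat{\ty}}=-(1/2\psi)\,dW^*/dY_{\hat{\ty}}\le 0$; $di_{\hi}^*/d\theta_{\lo}=-(1/2\psi)\,dW^*/d\theta_{\lo}\ge 0$ and $di_{\lo}^*/d\theta_{\lo}=(1-dW^*/d\theta_{\lo})/(2\psi)\ge 0$; and $di_{\lo}^*/d\theta_{\hi}=-(1/2\psi)\,dW^*/d\theta_{\hi}\le 0$.

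For the joint perturbation in which $\theta_{\hi}$ rises by less than $\theta_{\lo}$ (so $d\theta_{\lo}>d\theta_{\hi}\ge 0$), I would combine the two direct effects into $dF|_{W}=\alpha_{\hi}(W-Y_{\hi})\,d\theta_{\hi}+\alpha_{\lo}(W-Y_{\lo})\,d\theta_{\lo}$ and show it is positive, which forces $dW^*=-dF|_{W}/F'(W^*)<0$. Eliminating $\alpha_{\lo}(W-Y_{\lo})$ via the equilibrium identity $\alpha_{\hi}(\theta_{\hi}-W)(W-Y_{\hi})=-\alpha_{\lo}(\theta_{\lo}-W)(W-Y_{\lo})$ turns positivity into $(\theta_{\hi}-W)\,d\theta_{\lo}>(\theta_{\lo}-W)\,d\theta_{\hi}$, which holds because $\theta_{\hi}-W>\theta_{\lo}-W>0$ and $d\theta_{\lo}>d\theta_{\hi}\ge 0$; then $i_{\lo}^*$ rises (both $d\theta_{\lo}$ and $-dW^*$ are positive) and $i_{\hi}^*$ rises ($d\theta_{\hi}\ge 0$ and $-dW^*>0$). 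The $\lo$-inactive statement is immediate: there $W^*=Y_{\hi}$ is independent of $\theta_{\hi}$, so $dW^*/d\theta_{\hi}=0$, and $i_{\hi}^*=(\theta_{\hi}-Y_{\hi})/(2\psi)$ gives $di_{\hi}^*/d\theta_{\hi}=1/(2\psi)\ge 0$, as long as the equilibrium stays $\lo$-inactive.

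The last and hardest part concerns the equilibrium set. The unifying fact is that every listed perturbation---lowering $\theta_{\hi}$, lowering $Y_{\hi}$ or $Y_{\lo}$, or raising $\theta_{\lo}$---weakly lowers $\Phi(W)$ at each fixed $W$ (each shifts weight toward the low-prevalence type or lowers a $Y$), so the gap $g(W):=\Phi(W)-W$ shifts down pointwise. I would then apply the standard bifurcation bookkeeping for a downward shift of $g$: at a smooth local maximum of $g$ a stable root---the larger-$W$ member of the colliding pair---annihilates with an unstable root, whereas at a local minimum a new stable root---the smaller-$W$ member---is born alongside an unstable one. Matching this to the parabola-plus-kink geometry (the stable $\lo$-inactive root $W=Y_{\hi}$ is the largest equilibrium and merges with the unstable both-active root at the kink $W=\theta_{\lo}$; the stable both-active root is the smallest and appears as the parabola's lower root crosses zero) yields exactly the stated dichotomy. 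I expect this step to be the principal obstacle: $\Phi$ is kinked at $W=\theta_{\lo}$, so the local-extremum argument has to be run across the non-smooth boundary between the two regimes, and one must invoke Proposition~\ref{prop:kremer} (at most three equilibria) to promote the local merging/creation conclusions to the global extremality---greatest $W^*$ disappears, or a new smallest $W^*$ appears---that the proposition asserts.
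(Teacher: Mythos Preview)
Your argument is correct and rests on the same scalar reduction as the paper: your $F(W)$ is precisely the paper's auxiliary function $\delta(W^*)=\alpha_{\hi}(\theta_{\hi}-W^*)(W^*-Y_{\hi})+\alpha_{\lo}\max\{0,\theta_{\lo}-W^*\}(W^*-Y_{\lo})$ restricted to the both-active range, and your stability criterion $F'(W^*)>0$ coincides with the paper's $\delta'(W^*)>0$. The differences are purely tactical. Where you use the implicit function theorem to sign $dW^*/d\beta=-(\partial F/\partial\beta)/F'(W^*)$ parameter by parameter, the paper observes once that $\delta$ is pointwise monotone in each parameter and invokes the \cite{milgrom+roberts1994} fact that a pointwise increase shifts ``increasing'' zeroes left and ``decreasing'' zeroes right. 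For the joint $\theta$-perturbation you substitute the equilibrium identity to reduce to $(\theta_{\hi}-W)\,d\theta_{\lo}>(\theta_{\lo}-W)\,d\theta_{\hi}$; the paper obtains the same inequality by noting that equal increments leave the ratio $(\theta_{\hi}-W^*)/(\theta_{\lo}-W^*)$ nonincreasing. The main place the paper's route buys something is the equilibrium-set claim: because the paper keeps the $\max\{0,\theta_{\lo}-W^*\}$ inside $\delta$, both regimes are handled by a single continuous function with $\delta(Y_{\lo})<0\le\delta(Y_{\hi})$, so the Milgrom--Roberts lemma (a pointwise increase of a function that is increasing at its extreme zeroes can only delete zeroes above the smallest or create new ones below it), together with the three-equilibrium bound from Proposition~\ref{prop:kremer}, yields the conclusion directly---bypassing the kink bookkeeping you correctly flag as the principal obstacle in your bifurcation-on-$\Phi$ approach.
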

The proof of Proposition~\ref{prop:compstats} is in the Appendix. 
The proposition shows that reducing $\theta_{\lo}$ weakly more than $\theta_{\hi}$ (the empirically observed change) increases weighted prevalence in the pool and reduces both types' activity and payoffs. 
The direct effect of abstinence education is a lower benefit $\theta_{\ty 1}\leq \theta_{\ty 0}$ from match-seeking, which for any prevalence $W$ and search intensity $i_{\ty}$ reduces the payoff, as evaluated at the new utility function $\pi_{\ty 1}(i_{\ty})=(\theta_{\ty 1}-W)i_{\ty}-\psi i_{\ty}^2$. 
The indirect effect of the changing composition of the pool of partners strengthens the direct effect when the low types reduce their activity more, because then $W$ increases. 
At the old utility function $\pi_{\ty 0}$, the new optimal activity level $i_{\ty1}^*$ was available but not chosen, thus by revealed preference payoffs decrease when evaluated at the old utility function: 
$\pi_{\ty 0}(i_{\ty 0}^*)\geq \pi_{\ty 0}(i_{\ty 1}^*)\geq \pi_{\ty 1}(i_{\ty 1}^*)$. 

Even if both types are compensated for the direct decrease in their benefit, their payoffs may fall (Example~\ref{ex:compensate} in the Appendix). Compensation means adding the benefit change times the initial activity level $(\theta_{\ty0}-\theta_{\ty1})i_{\ty0}^*$ to the payoff $\pi_{\ty 1}$ after the change.\footnote{Reimbursing agents for the utility loss at their original choice is analogous to a Slutsky compensated price change in consumer theory. In a single agent decision, by revealed preference, payoff cannot decrease after Slutsky compensation. In the game in this paper, the externality of rising disease prevalence in response to lower benefits from matching may be sufficiently strong to outweigh the compensation.} 
In reality, the compensation may be interpreted as an increased utility from alternative activities in response to abstinence education that reduces enjoyment of casual sex. This `preference substitution' is unlikely to be large enough in practice to keep people's payoff constant when they do not change their activity level, so the compensation considered above should be thought of as an upper bound. 


In Proposition~\ref{prop:compstats}, a sufficient condition for prevalence to fall and the probabilities of matching to increase after intensifying the preference for a greater number of partners is that the benefit from more matches increases less for high-activity people. This condition may be weakened, but cannot be omitted entirely, as shown next.  
Raising $\theta_{\hi}$ always increases prevalence and may reduce both types' matching probabilities and payoffs, as the following example demonstrates.  
\begin{example}
\label{ex:counterintuitive}
Take $\psi=0.1$, $\alpha_{\ty}=0.5$, $Y_{\lo}=0.1$, $Y_{\hi}=0.6$, $\theta_{\lo}=0.45$. If $\theta_{\hi0}=0.46$, then in the unique stable equilibrium, $(i_{\lo0}^*,i_{\hi0}^*)\approx (0.43,0.48)$ and $W^*_{0}\approx 0.38$, but if $\theta_{\hi1}=0.47$, then $(i_{\lo1}^*,i_{\hi1}^*)\approx (0.34,0.44)$ and $W^*_{1}\approx 0.36$. 
\end{example}
Adding a small enough increase in $\theta_{\lo}$ to Example~\ref{ex:counterintuitive} does not overturn the qualitative result that a greater benefit from finding a partner increases disease prevalence in the pool of match-seekers and reduces the matching rates and payoffs. 
Reducing people's benefit $\theta_{\ty}i_{\ty}-\psi i_{\ty}^2$ from an interaction directly decreases their payoff, but Example~\ref{ex:counterintuitive} shows that the indirect effect of the changing composition of the pool of partners may overturn the direct effect. 
This counterintuitive comparative static illustrates that taking preferences into account may yield diametrically different predictions from assuming fixed behaviour (e.g.\ the exogenous number of partners in \cite{kremer1994}). 

To complete the comparative statics in Proposition~\ref{prop:compstats}, the following argument establishes another interesting prediction of \cite{kremer1996}: an increase in the matching cost $\psi$ (faster satiation) does not affect prevalence in any equilibrium nor change the set of equilibria. In $\lo$-inactive equilibria, $W^*=Y_{\hi}$ and $i_{\lo}^*$ are constant in $\psi$, and $i_{\hi}^*$ remains positive for all $\psi$. In both-active equilibria, with linear marginal cost, positive matching probabilities $i_{\ty}^*$ fall proportionately in $\psi$, which leaves $W^*$ unchanged, thus has no indirect effect or feedback to $i_{\ty}^*$. Payoffs and matching probabilities of course decrease in $\psi$ in any equilibrium. 

Arguments in \cite{kremer1996} often rely on equiproportionate reductions in activity by the types, for example in response to rising weighted prevalence in the pool. To the author's knowledge, the literature has not pointed out that increasing the satiation parameter $\psi$ causes such proportional reductions. 
It is worth mentioning that for both-active equilibria, the result relies on the quadratic cost of search. However, the arguments using proportionally decreasing matching probabilities still hold when low types reduce their activity by a greater factor than high, which is the case for a wide class of search costs (satiation functions). 

Having established the formal results, the next section proceeds to discuss extensions and implications of the model.

\section{Discussion}
\label{sec:discussion}

The results in the preceding section generalise to a wide variety of strictly convex costs of increasing the matching probability, as may be seen in the similar framework of \cite{heinsalu2018}. That paper also shows that the results change qualitatively when the cost is close to linear, because then the matching probability is always either minimal or maximal. 

The qualitative results remain unchanged when two populations $g\in\set{f,m}$ match with each other, for example females and males. 
The weighted infection rate in population $g$ is $W_{g}:=\frac{\alpha_{g\hi}i_{g\hi}Y_{g\hi}+\alpha_{g\lo}i_{g\lo}Y_{g\lo}}{\alpha_{g\hi}i_{g\hi}+\alpha_{g\lo}i_{g\lo}}$, which is the prevalence in the pool of partners for population $\hat{g}\neq g$. 
Male-to-female HIV transmission rate is higher than the female-to-male rate, which can be captured by $\beta>1$ in the utility function $\pi_{f\ty}(i_{f\ty}) =(\theta_{f\ty}-\beta W_{m})i_{f\ty}-\psi i_{f\ty}^2$. The factor $\beta$ also describes different costs of infection between the populations. For example, the human papilloma virus (HPV) harms females more, which corresponds to a greater $\beta$. 

Type $\ty$ in population $g$ responds to own preferences $\theta_{g\ty}$ and the infection rate $W_{\hat{g}}$ in the other population. If both populations are subject to similar prevention, treatment or abstinence interventions, then the effects resemble the single population case discussed above. 

Embedding the short run framework of this paper in an epidemiological SI model is straightforward and allows the infection rates of the types to respond the probability of matching and the weighted prevalence in the pool of partners. In the long run, the additional feedback from infection rates to behaviour may change the comparative statics of vaccination and treatment, as the previous literature shows. The current paper's results from abstinence-caused preference change are reinforced in the long run when higher weighted prevalence in the partner pool increases infection rates for both types. 

As this paper shows, abstinence programs and other preference-altering social distancing policies may backfire during epidemics, not just in terms of disease prevalence, but also by reducing welfare. On the other hand, medical interventions unambiguously cut infections and improve payoffs, at least in the short run. By the principle of `first, do no harm', priority should be given to policies directly targeting disease transmission, unless compelling countervailing reasons exist.

\appendix
\section{Appendix}
\label{sec:appendix}

\begin{proof}[Proof of Proposition~\ref{prop:compstats}]
In any equilibrium, by definition $W^*\in[Y_{\lo},Y_{\hi}]$. Best responses to any $W^*$ satisfy $i_{\lo}^*\leq i_{\hi}^*$, so $W^*\geq \alpha_{\hi}Y_{\hi}+\alpha_{\lo}Y_{\lo}$. If $\theta_{\hi}\leq W^*$, then $i_{\ty}^*= \pi_{\ty}^*=0$ and $W^*=Y_{\hi}$ (worst off-path belief), so the comparative statics are trivial: neither $i_{\ty}^*$ nor $\pi_{\ty}^*$ responds to parameters, and $W^*$ responds only to $Y_{\hi}$. 
Suppose $W^*<\theta_{\hi}$ from now on. 

Substitute the BR $i_{\ty}^*=\max\set{0,\frac{\theta_{\ty}-W^*}{2\psi}}$ into~(\ref{W}) to obtain 
\begin{align}
\label{wfunc}
w(W^*) 
=\frac{\alpha_{\hi}Y_{\hi}+\alpha_{\lo}Y_{\lo}\max\set{0,\theta_{\lo}-W^*}/(\theta_{\hi}-W^*)}{\alpha_{\hi}+\alpha_{\lo}\max\set{0,\theta_{\lo}-W^*}/(\theta_{\hi}-W^*)}.
\end{align} 
Equilibria are fixed points of $w(\cdot)$, because both types are best responding to the $W^*$ they expect and the resulting $w(W^*)=W^*$ justifies the expectation. 
If $W^*\geq\theta_{\lo}$, then $w'(W^*)=0$, and if $W^*<\theta_{\lo}$, then $w'(W^*)>0$, because $\frac{d}{dW^*}\left(\frac{\theta_{\lo}-W^*}{\theta_{\hi}-W^*}\right) =\frac{-\theta_{\hi}+W^*+\theta_{\lo}-W^*}{(\theta_{\hi}-W^*)^2}<0$. An equilibrium is stable iff $|w'(W^*)|<1$. 

To simplify the study of the fixed points of $w(\cdot)$, define 
\begin{align*}
\delta(W^*):&=[W^*-w(W^*)][\alpha_{\hi}(\theta_{\hi}-W^*)+\alpha_{\lo}\max\set{0,\theta_{\lo}-W^*}] 
\\&\notag=\alpha_{\hi}(\theta_{\hi}-W^*)(W^*-Y_{\hi})+\alpha_{\lo}\max\set{0,\theta_{\lo}-W^*}(W^*-Y_{\lo}).
\end{align*} 
The zeroes of $\delta(W^*)$ are the fixed points of $w(W^*)$, because $\theta_{\hi}> W^*$, so $\alpha_{\hi}(\theta_{\hi}-W^*)>0$. 
Clearly $\delta(Y_{\lo})<0\leq \delta(Y_{\hi})$. By continuity of $\delta$, equilibrium exists. 
An equilibrium with prevalence $W^*$ is stable iff $\delta'(W^*)>0$. 
Continuity of $\delta$ implies that stable and unstable equilibria alternate.

The function $\delta(\cdot)$ pointwise increases in $\theta_{\lo}$, $\alpha_{\lo}$ and decreases in $\theta_{\hi}$, $\alpha_{\hi}$, $Y_{\ty}$. 
Increasing a function pointwise shifts the zeroes at which the function increases left and the zeroes at which the function decreases right \citep{milgrom+roberts1994}. 
Thus in stable equilibria, $W^*$ decreases in $\theta_{\lo}$, $\alpha_{\lo}$ and increases in $\theta_{\hi}$, $\alpha_{\hi}$, $Y_{\ty}$ (note $W^*$ differs from the $W$ defined in~(\ref{W}) and the function $w$ in~(\ref{wfunc})). 
The partial derivatives $\frac{\partial i_{\hi}^*}{\partial W^*}< 0$, $\frac{\partial i_{\lo}^*}{\partial W^*}\leq 0$, $\frac{\partial i_{\hi}^*}{\partial \psi}< 0$, $\frac{\partial i_{\lo}^*}{\partial \psi}\leq 0$, $\frac{\partial i_{\hat{\ty}}^*}{\partial Y_{\ty}}=0$, $\frac{\partial i_{\hi}^*}{\partial \theta_{\lo}} =\frac{\partial i_{\lo}^*}{\partial \theta_{\hi}}=0$, $\frac{\partial i_{\ty}^*}{\partial \theta_{\ty}}\geq 0$ 
imply the total derivatives $\frac{d i_{\ty}^*}{d Y_{\hat{\ty}}}\leq 0$ for $\ty,\hat{\ty}\in\set{\hi,\lo}$, and $\frac{d i_{\ty}^*}{d \theta_{\lo}}\geq 0$, $\frac{d i_{\lo}^*}{d \theta_{\hi}}\leq 0$. 

Define 
$
\delta(W^*,\gamma) :=\alpha_{\hi}\gamma(1-\epsilon)(\theta_{\hi}-W^*)(W^*-Y_{\hi})+\alpha_{\lo}\gamma\max\set{0,\theta_{\lo}-W^*}(W^*-Y_{\lo})
$
for $\epsilon\geq0$, $\gamma>0$. Then $\frac{d\delta(W^*,\gamma)}{d \gamma} 
=\delta(W^*,1)-\epsilon\alpha_{\hi}(\theta_{\hi}-W^*)(W^*-Y_{\hi}) >\delta(W^*)$, so at any $W_0^*$ at which $\delta(W_0^*)=0$, we have $\frac{d\delta(W_0^*,\gamma)}{d \gamma}>0$. At stable equilibria therefore $\frac{d W^*}{d \gamma}<0$ and $\frac{d i_{\ty}^*}{d \gamma}>0$. 
Due to $\theta_{\hi}-W^*>\max\set{0,\theta_{\lo}-W^*}$, increasing both $\theta_{\ty}$ by the same amount (or $\theta_{\hi}$ by less than $\theta_{\lo}$) does not increase $\frac{\theta_{\hi}-W^*}{\theta_{\lo}-W^*}$, thus reduces $W^*$ and increases $i_{\ty}^*$. 
In $\lo$-inactive, $W^*=Y_{\hi}$, so $\frac{di_{\hi}^*}{d\theta_{\hi}}>0$. 

At its smallest and largest zero, $\delta$ increases, because $\delta(Y_{\lo})<0\leq \delta(Y_{\hi})$. 
A pointwise increase in a function that increases at its smallest and largest zero either leaves the set of zeroes unchanged, removes zeroes above the smallest or adds new zeroes below the smallest \citep{milgrom+roberts1994}. By Proposition~\ref{prop:kremer} from \cite{kremer1996}, $\delta$ has at most 3 zeroes. Therefore the set of stable equilibria $\set{W^*:\delta'(W^*)>0}$ shifts down unambiguously when $\delta$ increases pointwise. 
\end{proof}

\begin{lemma}
\label{lem:unstable}
The equilibrium with the second-lowest $W^*$ has $\frac{dW}{dW^*}>1$. 
\end{lemma}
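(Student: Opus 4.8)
The plan is to translate the stability criterion already used in the proof of Proposition~\ref{prop:compstats} into a statement about the slope of the fixed-point map $w$. The first observation is that substituting the best responses $i_{\ty}^*=\max\set{0,(\theta_{\ty}-W^*)/(2\psi)}$ into~(\ref{W}) is precisely the operation that produces $w(W^*)$ in~(\ref{wfunc}); hence the derivative $\frac{dW}{dW^*}$ appearing in the statement \emph{is} $w'(W^*)$. So the lemma asserts that the second-lowest equilibrium is unstable in the sense $w'(W^*)>1$, which by the equivalence noted earlier (stable iff $|w'(W^*)|<1$, and $w'\geq0$ always) is exactly instability.

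First I would record the algebraic link between $\delta$ and $w$. Writing $D(W^*):=\alpha_{\hi}(\theta_{\hi}-W^*)+\alpha_{\lo}\max\set{0,\theta_{\lo}-W^*}$, which is strictly positive because $\theta_{\hi}>W^*$, the definition of $\delta$ reads $\delta(W^*)=[W^*-w(W^*)]\,D(W^*)$. Differentiating by the product rule gives $\delta'(W^*)=[1-w'(W^*)]\,D(W^*)+[W^*-w(W^*)]\,D'(W^*)$. At any equilibrium $W^*$ is a fixed point, so $W^*-w(W^*)=0$ and the second term vanishes, leaving $\delta'(W^*)=[1-w'(W^*)]\,D(W^*)$. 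Since $D(W^*)>0$, the sign of $\delta'(W^*)$ equals the sign of $1-w'(W^*)$, so $\delta'(W^*)<0$ is equivalent to $w'(W^*)>1$. Differentiability is not an issue at the relevant point: the second-lowest equilibrium is both-active, hence $W^*<\theta_{\lo}$, where $w$ and $D$ are smooth.

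It then remains to show that the second-lowest equilibrium satisfies $\delta'(W^*)<0$. Here I would invoke the sign pattern $\delta(Y_{\lo})<0\leq\delta(Y_{\hi})$ together with the bound of at most three equilibria from Proposition~\ref{prop:kremer}. A second-lowest equilibrium exists only when there are exactly three zeroes $W_1^*<W_2^*<W_3^*$ of $\delta$ in $[Y_{\lo},Y_{\hi}]$ (case~2 of Proposition~\ref{prop:kremer}). Because $\delta<0$ just to the right of $Y_{\lo}$ and the two both-active zeroes are distinct roots of the downward-opening quadratic obtained on $\set{W^*<\theta_{\lo}}$ (the strict discriminant condition of case~2 makes them simple, hence transversal crossings), $\delta$ increases through $W_1^*$, decreases through $W_2^*$, and increases through $W_3^*$; equivalently stable and unstable equilibria alternate, beginning and ending stable. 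Thus the middle zero $W_2^*$—the second-lowest—has $\delta'(W_2^*)<0$, and by the identity above $w'(W_2^*)>1$, i.e.\ $\frac{dW}{dW^*}>1$.

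The main obstacle is the bookkeeping in the last paragraph: pinning down that the second-lowest equilibrium is genuinely the middle transversal crossing—rather than coinciding with the boundary fixed point at $Y_{\hi}$ or arising from a tangency (double root)—requires combining the boundary signs of $\delta$ with Kremer's count and the strictness of the case-2 discriminant to exclude degenerate roots. Once the alternation pattern is secured, the reduction via the product-rule identity is immediate.
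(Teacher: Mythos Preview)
Your argument is correct, but it takes a different route from the paper's. The paper works directly with the fixed-point map $W(W^*)$: it notes that $\lim_{W^*\to-\infty}W(W^*)=\alpha_{\hi}Y_{\hi}+\alpha_{\lo}Y_{\lo}$ is finite, so $W(W^*)>W^*$ for $W^*$ small, forcing the smallest fixed point to be crossed from above (slope $<1$); continuity and the absence of intermediate fixed points then force the second-smallest fixed point to be crossed from below (slope $>1$). No appeal to $\delta$, to the quadratic structure, or to Proposition~\ref{prop:kremer} is made.

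Your detour through $\delta$ and the product-rule identity $\delta'(W^*)=[1-w'(W^*)]D(W^*)$ at fixed points is a legitimate alternative that recycles the machinery already built in the proof of Proposition~\ref{prop:compstats}. It costs you more bookkeeping (ordering the $\lo$-inactive and both-active equilibria, locating the middle root, invoking case~2 of Proposition~\ref{prop:kremer}), but it also buys something the paper's short argument glosses over: by using the strict discriminant of case~2 you get simple roots of the downward-opening quadratic, hence a genuine strict inequality $\delta'(W_2^*)<0$ and thus $w'(W_2^*)>1$, rather than merely $\geq1$. The paper's alternation argument, as written, does not explicitly rule out tangency at $W_2^*$.
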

\begin{proof}
If $\theta_{\hi}\leq W^*$, then the unique equilibrium is $i_{\ty}^*$ and the lemma holds vacuously. Suppose $\theta_{\hi}> W^*$ from now on. 
Substitute $i_{\ty}^*$ into~(\ref{W}) to obtain 
$W(W^*) =\frac{\alpha_{\hi}(\theta_{\hi}-W^*)Y_{\hi}+\alpha_{\lo}\max\set{0,\theta_{\lo}-W^*}Y_{\lo}}{\alpha_{\hi}(\theta_{\hi}-W^*)+\alpha_{\lo}\max\set{0,\theta_{\lo}-W^*}}$, which is continuous. Due to $\theta_{\hi}> W^*$, the $\max$ in $i_{\hi}^*$ may be omitted w.l.o.g. 
Equilibria are fixed points $W(W^*)=W^*$. Due to $\lim_{W^*\rightarrow-\infty}W(W^*)=\alpha_{\hi}Y_{\hi}+\alpha_{\lo}Y_{\lo}>W^*$, the smallest fixed point $W^*_{\min}$ features $\frac{dW}{dW^*}|_{W=W^*_{\min}}<1$. 
Thus for $\epsilon>0$ small enough, $W(W^*_{\min}+\epsilon)<W^*_{\min}+\epsilon$. For the second-smallest fixed point $W^*_2$ then $W(W^*_{2}-\epsilon)>W^*_{2}-\epsilon$ and $\frac{dW}{dW^*}|_{W=W^*_{2}}>1$. 
\end{proof}

\begin{example} 
\label{ex:compensate}
Let $\psi=0.1$, $\alpha_{\ty}=0.5$, $Y_{\lo}=0.1$, $Y_{\hi}=\theta_{\lo0}=0.3$, $\theta_{\hi0}=0.31$, then the chosen matching probabilities are $i_{\lo0}^*=0.475$, $i_{\hi0}^*=0.525$, and the payoffs are $\pi_{\lo 0}^*\approx 0.023$, $\pi_{\hi 0}^*\approx 0.028$. Reducing the benefit of activity to $\theta_{\lo1}=0.27$, $\theta_{\hi1}=0.3$ results in $i_{\lo1}^*=0.225$, $i_{\hi1}^*=0.375$, $\pi_{\lo 1}^*\approx 0.005$, $\pi_{\hi 1}^*\approx 0.014$, so Slutsky compensating the types by $(\theta_{\lo0}-\theta_{\lo1})i_{\lo0}^*\approx 0.014$, $(\theta_{\hi0}-\theta_{\hi1})i_{\hi0}^*\approx 0.005$ is insufficient to raise the payoffs to the initial level. 
\end{example}

\bibliographystyle{ecta}
\bibliography{teooriaPaberid} 
\end{document}